\documentclass{article}
\usepackage[utf8]{inputenc}

\usepackage{amsthm}
\usepackage{amsmath}
\usepackage{amssymb}

  \newtheorem{theorem}{Theorem}

  \newtheorem{Proposition} {Proposition}

\theoremstyle{definition}

 \newtheorem{definition} {Definition}

  \newtheorem{remark}{Remark}

\begin{document}
\title{An RIP-based approach to $\Sigma\Delta$ quantization for compressed sensing}

\author{Joe-Mei Feng and Felix Krahmer}
\maketitle
\begin{abstract}
 In this paper, we provide a new approach to estimating the error of reconstruction from $\Sigma\Delta$ quantized compressed sensing measurements.
 Our method is based on the restricted isometry property (RIP) of a certain projection of the measurement matrix. 
 Our result yields simple proofs and a slight generalization of the best-known reconstruction error bounds for Gaussian and subgaussian measurement matrices. 
\end{abstract}

\section{Introduction}

\subsection{Compressed sensing}\label{Compressed sensing}

 Compressed sensing has drawn significant attention since the seminal works by Cand\`es, Romberg, Tao  \cite{candes}, and Donoho \cite{do06-2}.
 The theory of compressed sensing is based on the observation that various cases of natural signals are approximately sparse with 
 respect to certain bases or frames.
 The basic idea is to recover such signals from a small number of linear measurements.
 Hence the problem turns into an underdetermined linear system.
Various criteria have been proposed to determine whether such a system has a unique sparse solution. 
In this paper we will work with the restricted isometry property (RIP) as introduced by Cand\`{e}s et al. \cite{ECTT} in the context of recovery guarantees for $\ell_1$ minimization.
\begin{definition}
  A matrix $A\in\mathbb{R}^{m\times N}$ has the restricted isometry property (RIP) of order $s$ if there exists $0<\delta<1$ such that for all $s$-sparse vectors 
  $x\in \mathbb{R}^N$, i.e., vectors that have at most $s$ non-zero components, one has
  \[
  (1-\delta)\|x\|_2^2\leq \|Ax\|_2^2\leq (1+\delta)\|x\|_2^2.
  \]
  The smallest such $\delta$ is called the restricted isometry constant of order $s$ and is denoted by  $\delta_s$.
\end{definition} 

There have been a number of works on recovery guarantees for compressed sensing with RIP measurement matrices. 
Recovery can be guaranteed for various algorithms. For the original context of $\ell_1$ minimization, 
the most recent results require the measurement matrix to have a restricted isometry constant of $\delta_{2s}<\frac{1}{\sqrt{2}}$ \cite{cai2013sparse}, 
which is known to be optimal \cite{davies2009restricted}.

Finding the restricted isometry constant of a measurement matrix is, in general, an NP hard problem \cite{tillmann2012computational}. 
On the other hand, deterministic matrix constructions with guaranteed RIP are only known for relatively large embedding dimensions (see for example \cite{de07-3}).
That is why many papers on the subject work with random matrices. 

Examples of random matrices known to have the RIP for large enough embedding dimension with high probability include subgaussian, 
partial random circulant \cite{Felix}, and partial random Fourier matrices \cite{rudelson2008sparse}. A subgaussian matrix has independent random entries whose tails are dominated by a Gaussian random variable (cf.~Definition~\ref{def:subgaussian}). 
Such matrices have been shown to have the RIP provided $m=\Omega(s\log(eN/s))$, see for example \cite{BDDW07:Johnson-Lindenstrauss}. 
This order of the embedding dimension $m$ is known to be optimal \cite{foucart2010gelfand}. Examples of subgaussian matrices include Gaussian and Bernoulli matrices.

  \subsection{Quantization}
 
To allow for digital transmission and storage of compressed sensing measurements,
one needs to quantize these measurements.  
  That is, the measurements need to be represented by finitely many symbols from a finite alphabet. In this paper, we only consider alphabets consisting of equispaced real numbers. 
  The extreme case of considering the set of only the two elements $\{-1,1\}$ is also called $1$-bit quantization. 
  
  The most intuitive method to 
 quantize the measurements is to map each of them to the closest element from the alphabet. Since this method
 processes the quantization independently for each measurement, it is also called memoryless scalar quantization (MSQ).
  
 Most of the literature on MSQ compressed sensing up to date considers 1-bit quantization 
\cite{Boufounos,JacquesLaska,Plan,Ai},
which amounts to considering only the measurement signs.   
Jacques et al. \cite{JacquesLaska} showed that for Gaussian measurements or 
measurements drawn uniformly from the unit sphere, a reconstruction error of $O(\frac{s}{m}\log\frac{mN}{s})$ is feasible. However, they did not provide an efficient algorithm that guarantees this accuracy.
Later, for Gaussian measurements, Gupta et al. \cite{Gupta}
demonstrated that one may tractably recover the support of a signal from $O(s\log N)$ measurements. 
Plan et al.~\cite{Plan} showed that one can, again for Gaussian measurements,
 reconstruct the direction of an $s$-sparse signal via convex optimization,
with accuracy $O((\frac{s}{m})^{\frac{1}{5}})$ up to logarithmic factors with high probability. 
Ai et al. \cite{Ai} derived similar results for subgaussian measurements under additional assumptions on the size of the signal entries. 

On the other hand, in \cite{JacquesLaska} it was shown that the $\ell_2$ reconstruction error can never be better 
than $\Omega(\frac{s}{m})$.
To break this bottleneck of MSQ,
$\Sigma\Delta$ quantization for compressed sensing has drawn attention recently. 
$\Sigma\Delta$ quantizes a vector as a whole rather than the components individually, 
i.e., the quantized values depend on previous quantization steps. 

$\Sigma\Delta$ quantization was originally introduced as an efficient quantizer for redundant representation of oversampled band-limited functions \cite{inose1962telemetering}.
Later on, a rigorous mathematical error analysis was provided by \cite{daubechies2003approximating} and many follow-up papers. The best known error decay rates are exponential in the oversampling rate, as derived in \cite{gunturk2003one, Deift}. This is known to be optimal: in \cite{KW12}, corresponding lower bounds are derived, which also show that the achievable accuracy must depend on the signal amplitude.

In \cite{benedetto2006sigma}, $\Sigma\Delta$ has been extended to frame expansions; this will be also the viewpoint taken in this paper. 
The first works on $\Sigma\Delta$ schemes for frame quantization, such as \cite{benedetto2006sigma, BPA2007}, required frame constructions with particular smoothness properties to yield reconstruction guarantees. In \cite{lammers:adf}, the  authors observed that what is needed is in fact a requirement on the dual frame used for reconstruction rather than the frame itself. 
Reconstruction guarantees can hence be improved by choosing the dual frame used for reconstruction appropriately. Optimizing the dual frame in this respect led to the definition of Sobolev dual frames \cite{blum2010sobolev}, cf.~Section~\ref{sobolevdual} below. Combined with the exponential error bounds derived for the corresponding $\Sigma\Delta$ schemes for bandlimited functions \cite{gunturk2003one, Deift}, Sobolev dual reconstructions yield root-exponential error decay in the oversampling rate. This constitutes the best known accuracy guarantees for coarse frame quantization, both for harmonic frames and Sobolev self-dual frames \cite{
KSW11} and for subgaussian random frames \cite{KrahmerSaab}. 

Sobolev dual reconstructions have also been crucial for being able to apply $\Sigma\Delta$ quantization to compressed sensing measurements. G{\"u}nt{\"u}rk et al.~\cite{Rayan} proved the first recovery guarantees for this setup, showing that for $r$th order $\Sigma\Delta$ quantization
applied to Gaussian compressed sensing measurements, the $\ell_2$ reconstruction error is of order $O((\frac{s}{m})^{\alpha(r-\frac{1}{2})})$ with high probability. Here $\alpha\in(0,1)$ is a parameter and the required measurements grows with 
$\alpha$, tending to infinity as $\alpha\rightarrow 1$. Indeed for $r$ large enough this breaks the MSQ bottleneck. 
More recently, in \cite{KrahmerSaab}, this result has been generalized to subgaussian measurements.

 \subsection{Contributions}

The main contribution of this paper is that the restricted isometry property (RIP) is applied to 
estimate the error bound for $\Sigma\Delta$ quantized compressed sensing. 
That is, once we know the restricted isometry constant of a modification of the measurement matrix, we 
can estimate the reconstruction error.

In the following results, we assume that the $\Sigma\Delta$ quantized measurements with quantization alphabet 
$\mathcal{Z}=\Delta \mathbb{Z}$, $\Delta>0$, are given. We refer the readers to Section \ref{quantization} for details
on the quantization scheme employed.
A special role is played by the $r$th power of the inverse of the finite difference matrix $D$ as introduced in \eqref{D} below;
denoting the singular value decomposition of $D^{-r}$ by $D^{-r}=U_{D^{-r}}S_{D^{-r}}V^*_{D^{-r}}$,
we obtain our main theorem given as follows.

 \begin{theorem}\label{error}
 Suppose one is given a measurement matrix $\Phi\in \mathbb{R}^{m\times N}$ such that both $\Phi$ and $\sqrt{\frac{1}{\ell}}P_\ell V^*_{D^{-r}}\Phi$, $\ell\leq m$ have the restricted isometry constant $\delta_{2s}<\frac{1}{\sqrt{2}}$,
where $P_\ell$ maps a vector to its first $\ell$ components.
 
 Then for an $s$-sparse signal $x\in\mathbb{R}^N$ satisfying $\min_{j}|x_j|\geq K 2^{r-\frac{1}{2}}\Delta$, for some positive constant $K$,  denote by $q$ the $r$th order $\Sigma\Delta$ quantized measurements of $\Phi x$ with step size $\Delta$.
Furthermore, denote by $T$ the support set recovered from $\Phi x$ via $\ell_1$ minimization and choose $L_{sob,r}$ to be the Sobolev dual matrix of $\Phi_T$ (see Section \ref{sobolevdual} for details).
Then reconstructing the 
signal via $\hat{x}_T=L_{sob,r}q$
yields a reconstruction error bounded by 
\[
\|x-\hat{x}\|_2 \leq C\Delta (\frac{m}{\ell})^{-r+\frac{1}{2}},
\]
where $C>0$ is a constant depending only on r.
\end{theorem}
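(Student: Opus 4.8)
The plan is to reduce the entire error estimate to a single lower bound on the smallest singular value of $D^{-r}\Phi_T$, and then to read that bound off from the RIP hypothesis on $\sqrt{\tfrac1\ell}\,P_\ell V^*_{D^{-r}}\Phi$ together with the known decay of the singular values of $D^{-r}$.

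First I would record the two structural inputs. By the RIP of $\Phi$ with $\delta_{2s}<\tfrac1{\sqrt2}$ together with the amplitude condition $\min_j|x_j|\ge K2^{r-1/2}\Delta$, $\ell_1$ minimization identifies the correct support, so $T=\mathrm{supp}(x)$ and $\Phi x=\Phi_T x_T$. Next I would invoke the defining state equation of the $r$th order scheme from Section \ref{quantization}: the quantized data satisfy $\Phi x - q = D^r u$, where stability of the scheme yields $\|u\|_\infty\le c_r\Delta$, hence $\|u\|_2\le c_r\sqrt m\,\Delta$. Since the Sobolev dual is $L_{sob,r}=(D^{-r}\Phi_T)^\dagger D^{-r}$, which is a left inverse of $\Phi_T$, substituting the state equation gives the clean error identity
\[
x_T-\hat x_T = L_{sob,r}(\Phi_T x_T - q) = L_{sob,r}D^r u = (D^{-r}\Phi_T)^\dagger u,
\]
so that, as $x-\hat x$ is supported on $T$,
\[
\|x-\hat x\|_2 \le \sigma_{\min}(D^{-r}\Phi_T)^{-1}\,\|u\|_2 .
\]

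The heart of the argument, and the step I expect to be the main obstacle, is lower bounding $\sigma_{\min}(D^{-r}\Phi_T)$. Writing $D^{-r}=U_{D^{-r}}S_{D^{-r}}V^*_{D^{-r}}$ and using that $U_{D^{-r}}$ is orthogonal, for any $z$ supported on $T$ one has $\|D^{-r}\Phi_T z\|_2=\|S_{D^{-r}}V^*_{D^{-r}}\Phi_T z\|_2$. Because $S_{D^{-r}}$ is diagonal with non-increasing entries, discarding all but the first $\ell$ coordinates and bounding the surviving diagonal entries below by $\sigma_\ell(D^{-r})$ gives
\[
\|S_{D^{-r}}V^*_{D^{-r}}\Phi_T z\|_2 \ge \sigma_\ell(D^{-r})\,\|P_\ell V^*_{D^{-r}}\Phi_T z\|_2 = \sigma_\ell(D^{-r})\sqrt{\ell}\,\bigl\|\tfrac1{\sqrt\ell}P_\ell V^*_{D^{-r}}\Phi z\bigr\|_2 .
\]
The RIP hypothesis on $\tfrac1{\sqrt\ell}P_\ell V^*_{D^{-r}}\Phi$ applied to the $s$-sparse vector $z$ now bounds the last factor below by $\sqrt{1-\delta_{2s}}\,\|z\|_2$, whence $\sigma_{\min}(D^{-r}\Phi_T)\ge \sigma_\ell(D^{-r})\sqrt{\ell(1-\delta_{2s})}$.

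Finally I would insert the standard estimate $\sigma_\ell(D^{-r})\ge c'_r(m/\ell)^r$ for the singular values of the $r$-fold difference inverse. Combining this with the stability bound $\|u\|_2\le c_r\sqrt m\,\Delta$ yields
\[
\|x-\hat x\|_2 \le \frac{\|u\|_2}{\sigma_\ell(D^{-r})\sqrt{\ell(1-\delta_{2s})}} \le \frac{c_r\sqrt m\,\Delta}{c'_r (m/\ell)^r\sqrt{\ell(1-\delta_{2s})}} = C\,\Delta\,(m/\ell)^{-r+1/2},
\]
which is the claimed bound with $C$ depending only on $r$ (through $c_r,c'_r$) and on the RIP constant. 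The two delicate points are verifying the singular-value lower bound for $D^{-r}$, which is a non-normal operator so this does not simply reduce to powering $D^{-1}$, and checking that the projection-plus-monotonicity step loses only the harmless factor $\sqrt{m/\ell}$ rather than degrading the rate.
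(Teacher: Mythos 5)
Your proposal is correct and follows essentially the same route as the paper: the reduction to $\sigma_{\min}(D^{-r}\Phi_T)$ via the Sobolev dual identity $L_{sob,r}D^r=(D^{-r}\Phi_T)^\dagger$, the SVD-plus-projection step bounding $\|S_{D^{-r}}V^*_{D^{-r}}\Phi_T z\|_2\geq \sigma_\ell(D^{-r})\|P_\ell V^*_{D^{-r}}\Phi_T z\|_2$, the RIP hypothesis giving the factor $\sqrt{\ell(1-\delta_{2s})}$, and Proposition \ref{D^-r} for $\sigma_\ell(D^{-r})\gtrsim (m/\ell)^r$ are exactly the paper's chain of estimates. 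Your only cosmetic deviation is noting a dependence of $C$ on $\delta_{2s}$, which is absorbed into an absolute constant since $\delta_{2s}<\tfrac{1}{\sqrt{2}}$ is fixed by hypothesis.
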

Note from Theorem \ref{error} that smaller values of $\ell$ yield better error bounds.
However, $\ell$ has to be large enough such that $\frac{1}{\sqrt{\ell}}(P_\ell V^*_{D^{-r}}\Phi)$ has the restricted isometry constant $\delta_{2s}\leq\tfrac{1}{\sqrt{2}}$.

This result can be applied to obtain recovery guarantees for Gaussian and subgaussian measurements (in the sense of Definition \ref{def:subgaussian} below).
The resulting bounds for the first two cases agrees with those derived in \cite{Rayan} and \cite{KrahmerSaab}, as summarized in the following Theorem. 
\begin{theorem}[\cite{Rayan, KrahmerSaab}]\label{subgaussian}

 Let $\Phi$ be an $m\times N$ matrix whose entries are independent, mean zero, unit variance $\rho$-subgaussian random variables and suppose that 
 $\lambda:=m/k\geq(C\log(eN/k))^{\frac{1}{1-\alpha}}$ where $\alpha\in(0,1).$ With high probability the $r$th order $\Sigma\Delta$ reconstruction $\hat{x}$ satisfies
 \begin{equation*}
  \|x-\hat{x}\|_2\leq C'\lambda^{-\alpha(r-1/2)}\delta,
 \end{equation*}
 for all $x\in\Sigma_k^N$ for which $\min_{j\in \mbox{\rm{supp}}(x)}|x_j|> K'\Delta.$ Again, $\Delta$ is the step size of the $\Sigma\Delta$ quantization alphabet and $C, C', K'$ are appropriate constants that depend
 only on $r$ and $\rho$.

\end{theorem}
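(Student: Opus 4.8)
The plan is to obtain Theorem~\ref{subgaussian} as a direct application of the abstract Theorem~\ref{error}: everything reduces to checking the two restricted isometry hypotheses for a $\rho$-subgaussian $\Phi$ and to choosing $\ell$ so that the generic bound $C\Delta(m/\ell)^{-r+1/2}$ specializes to the stated $C'\lambda^{-\alpha(r-1/2)}\Delta$ (I read the $\delta$ in the target as the step size $\Delta$, and identify the sparsity $s=k$). Concretely, I would set $\ell:=\lceil m\lambda^{-\alpha}\rceil=\lceil k\lambda^{1-\alpha}\rceil$, so that $m/\ell\asymp\lambda^{\alpha}$ and hence $(m/\ell)^{-(r-1/2)}\asymp\lambda^{-\alpha(r-1/2)}$, which matches the conclusion up to the constant $C'$. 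With this choice, the requirement $\ell\gtrsim k\log(eN/k)$ for the RIP is precisely equivalent to $\lambda^{1-\alpha}\gtrsim\log(eN/k)$, i.e.\ to the hypothesis $\lambda\ge(C\log(eN/k))^{1/(1-\alpha)}$; since $m=\lambda k\ge\ell$, the same threshold holds for $m$. The amplitude factor $K2^{r-1/2}$ appearing in Theorem~\ref{error} is absorbed into $K'$.

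Next I would dispatch the RIP of $\Phi$ itself, which is classical. With independent mean-zero unit-variance $\rho$-subgaussian entries, the rows of $\tfrac{1}{\sqrt m}\Phi$ are independent isotropic subgaussian vectors, so the standard subgaussian RIP result (e.g.\ \cite{BDDW07:Johnson-Lindenstrauss}) yields $\delta_{2k}<\tfrac{1}{\sqrt2}$ with probability $1-e^{-cm}$ as soon as $m\gtrsim k\log(eN/k)$, which holds by the previous paragraph. In particular $\ell_1$ minimization recovers the correct support $T=\mathrm{supp}(x)$ from the noiseless measurements $\Phi x$.

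The real work, and the main obstacle, is the RIP of $M_\ell:=\tfrac{1}{\sqrt\ell}P_\ell V^*_{D^{-r}}\Phi$. For \emph{Gaussian} $\Phi$ this is immediate: since $V^*_{D^{-r}}$ is orthogonal, rotational invariance gives $V^*_{D^{-r}}\Phi\stackrel{d}{=}\Phi$, so $M_\ell$ is again a Gaussian RIP matrix with $\ell$ rows. The subgaussian case has no rotational invariance, which is exactly why \cite{KrahmerSaab} needs an argument beyond \cite{Rayan}, and left-multiplication by $V^*_{D^{-r}}$ mixes the independent rows of $\Phi$ into dependent ones. I would therefore argue pointwise. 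Fix an $k$-sparse unit vector $x$; because the rows of $\Phi$ are independent and $\|x\|_2=1$, the vector $y:=\Phi x\in\mathbb{R}^m$ has independent, mean-zero, variance-one coordinates with subgaussian parameter of order $\rho$. Writing $\Pi:=V_{D^{-r}}P_\ell^{*}P_\ell V^*_{D^{-r}}$, the orthogonal projection onto the first $\ell$ right singular directions of $D^{-r}$, one has $\|M_\ell x\|_2^2=\tfrac1\ell\,y^*\Pi y$ with $\mathbb{E}\|M_\ell x\|_2^2=\tfrac1\ell\operatorname{tr}\Pi=1$. The Hanson--Wright inequality, using $\|\Pi\|_F^2=\ell$ and $\|\Pi\|=1$, then gives $\Pr(|\tfrac1\ell y^*\Pi y-1|>\delta)\le 2e^{-c\delta^2\ell}$ for a constant $c=c(\rho)$.

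Finally I would upgrade this pointwise concentration to a uniform RIP bound by a standard covering argument over $k$-sparse vectors: the union bound over a net of cardinality $\exp(Ck\log(eN/k))$ succeeds provided $\ell\gtrsim k\log(eN/k)$, which holds by the choice of $\ell$, yielding $\delta_{2k}(M_\ell)<\tfrac{1}{\sqrt2}$ with high probability. A further union bound over this event and the RIP event for $\Phi$ shows that both hypotheses of Theorem~\ref{error} hold simultaneously with high probability, and invoking that theorem produces the claimed error $C'\lambda^{-\alpha(r-1/2)}\Delta$. The two technical points requiring care are the bookkeeping of normalization constants between the unit-variance convention used here and the isometry normalization in Theorem~\ref{error}, and the verification that the subgaussian parameter of $y$ (and hence the constant $c$ in Hanson--Wright) is genuinely independent of the ambient dimension $m$; the latter is exactly what the orthonormality of the rows of $P_\ell V^*_{D^{-r}}$ guarantees.
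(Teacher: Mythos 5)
Your proposal is correct and follows essentially the same route as the paper: the same choice $\ell\asymp m\lambda^{-\alpha}$, rotation invariance disposing of the Gaussian case, and for the subgaussian case the Hanson--Wright inequality applied to the quadratic form $\|P_\ell V_{D^{-r}}^*\Phi x\|_2^2$ followed by a union bound over an $\epsilon$-net of the sparse unit vectors, exactly as in the paper's proof sketch. In fact you supply more detail than the paper does (the explicit projection $\Pi$ with $\operatorname{tr}\Pi=\|\Pi\|_F^2=\ell$, $\|\Pi\|=1$, and the reduction to the vector $y=\Phi x$ with independent subgaussian coordinates), and you correctly read the $\delta$ in the stated bound as the step size $\Delta$.
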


 \subsection{Organization}
The paper is organized as follows.
We first introduce in Section~\ref{Background} some background and previous results on $\Sigma\Delta$ quantization, suprema of chaos processes, 
and the partial random circulant matrices.
In Section \ref{RIP-based error analysis} we present our main result showing how
the RIP is used to estimate the reconstruction error for quantized compressed sensing.
In Section \ref{Gaussian and subgaussian matrices}, we explain how our result recovers the best-known bounds
for Gausssian and subgaussian measurement matrices using a simple argument, in Section \ref{generalization} we slightly generalize these bounds.
We conclude in Section~\ref{conclusion}.

\section{Background and previous results}\label{Background}
\subsection{Notation}\label{Notations}
Throughout this paper, we use the following notation.
The set $D_{s,N}=\{x\in\mathbb{R}|\|x\|_2\leq 1,\|x\|_0\leq s\}$ is the set of unit norm $s$-sparse vectors.
The $\ell_0$-norm
$\|\cdot\|_0$ counts the number of non-zero components of a vector. Given a signal $x$, the support set of $x$, in short, supp $x$, 
is the index set of the non-zero components.
The $\ell_2$-operator norm is denoted by $\|A\|_{2\rightarrow 2}=\sup_{\|x\|_2= 1}\|Ax\|_2$.
For a matrix $A$, $\sigma_i(A)$ and $\sigma_{\min}(A)$ denote the $i$th largest and the smallest singular value, respectively.
Furthermore we write $\gtrsim$ and $\lesssim$ to denote $\geq$ or $\leq$ up to a positive multiplicative constant.
The Moore-Penrose pseudoinverse of a matrix $A$ is denoted by ${A}^{\dag}=(A^*A)^{-1}A^*$.

We will mainly study subgaussian random matrices, that is, matrices with independent subgaussian entries in the sense of the following definition.
\begin{definition}\label{def:subgaussian}
A random variable $X$ is called $\rho$-subgaussian if $\mathbb{P}(|X|\geq t)\leq 2\exp(-t^2/2{\rho}^2)$.
\end{definition}
\subsection{$\Sigma\Delta$ Quantization}\label{quantization}


In this paper, we exclusively focus on quantization alphabets $\mathcal{Z}$ such that $\mathcal{Z}=\Delta\mathbb{Z}$, for some $\Delta>0$. 
Note that while this is an infinite set, one can show that in fact only a finite range of values are assumed \cite{Rayan, KSW11}.
Hence this setup is in line with requiring a finite alphabet.
The idea of $r$th order $\Sigma\Delta$ quantization is to quantize each component of a vector taking the previous $r$ quantization steps into account.
More explicitly,
a greedy $r$th order $\Sigma\Delta$ quantization scheme maps a sequence of inputs $(y_j)$ to elements $q_i\in\mathcal{Z}$
via an internal state variable $u_i$ chosen to satisfy the recurrence relation
\begin{align}\label{sigmadelta}
(\Delta^r u)_i:=\sum_{j=0}^{r} \binom{r}{j}(-1)^{j}u_{i-j}=y_i-q_i,
\end{align}
where $q_i$ is chosen such that $|u_i|$ is minimized (Note that only in this equation, $\Delta$ denotes the finite difference operator, whereas all other occurences in this paper refer to the quantization step size).  

With the initial condition $(u_i)_{i=0}^{-\infty}=0$, Equation (\ref{sigmadelta})
can be expressed as
\begin{equation*}
D^{r} u=y-q,
\end{equation*}
where the finite difference matrix $D\in{\mathbb R}^{m\times m}$ is given by 
\begin{equation}\label{D}
D_{ij}\equiv\left\{\begin{array}{ll}
1&,\mbox{ if }i=j,\\
-1&,\mbox{ if }i=j+1,\\
0&,\mbox{ otherwise.}
\end{array}\right.
\end{equation}

\subsubsection{Support set recovery}
Given an $s$-sparse signal $x$, and an $m\times N$ measurement matrix $\Phi$, where $m\ll N$, we acquire measurements $y=\Phi x$.
Applying an $r$th order $\Sigma\Delta$ quantization scheme to $y$, we obtain $q$.
Treating $q$ as perturbed measurements, i.e., $q=y+e=\Phi x+e$, one can determine the support set. 
This is a consequence of the following observation, which is
a modified version of Proposition 4.1 in \cite{Rayan} combined with the reconstruction guarantees in \cite{cai2013sparse}.
\begin{Proposition}\label{coarserecovery}
Given $\epsilon>0$ as well as $x\in\mathbb{R}^N$ an $s$-sparse signal with $\operatorname{supp} x = T$ and 
$\min_{j\in T}|x_j|\geq K \frac{\epsilon}{\sqrt{m}}$. Here $K$ is an absolute constant.  Let $\Phi\in\mathbb{R}^{N\times m}$ be a measurement matrix such that $\frac{1}{\sqrt{m}}\Phi$ has the RIP with $\delta_{2s}<\frac{1}{\sqrt{2}}$. Denote by $e\in {\mathbb R}^m$ a noise vector with $\|e\|_2\leq\epsilon$, and let $x'$ be the signal reconstructed from the noisy measurements $q=\Phi x+e$ via $\ell_1$ minimization, i.e., 
\begin{equation*}
x'=\mbox{\rm{arg}}\min \|z\|_1\ \mbox{subject to } \|\Phi z-q\|_2\leq\epsilon.
\end{equation*}
Then 
the index set of largest $s$ components of $x'$ is $T$, that is, the support set of $x$ is correctly recovered.
\end{Proposition}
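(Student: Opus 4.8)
The plan is to combine two ingredients: the stable recovery guarantee for $\ell_1$ minimization under the condition $\delta_{2s}<\frac{1}{\sqrt{2}}$ from \cite{cai2013sparse}, and a coordinatewise thresholding argument showing that a sufficiently small $\ell_2$ reconstruction error forces the $s$ largest-magnitude entries of the recovered vector to sit exactly on $T$. The heavy analytic lifting is entirely outsourced to the cited $\ell_1$ guarantee, so the work is in setting up the normalization and the final comparison.

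First I would normalize the problem to match the form in which the recovery guarantee is stated. Setting $A=\frac{1}{\sqrt{m}}\Phi$, the matrix $A$ has $\delta_{2s}<\frac{1}{\sqrt{2}}$ by hypothesis. Dividing measurements and constraint by $\sqrt{m}$, the feasibility condition $\|\Phi z-q\|_2\leq\epsilon$ becomes $\|Az-\tilde q\|_2\leq\frac{\epsilon}{\sqrt{m}}$ with $\tilde q=\frac{1}{\sqrt{m}}q$, and the effective noise on the normalized system is $\|A x-\tilde q\|_2=\frac{1}{\sqrt{m}}\|e\|_2\leq\frac{\epsilon}{\sqrt{m}}$. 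The result of \cite{cai2013sparse} then applies, and since $x$ is exactly $s$-sparse its best $s$-term approximation error vanishes, so the general bound collapses to $\|x'-x\|_2\leq C_2\frac{\epsilon}{\sqrt{m}}$, where $C_2$ depends only on the value of $\delta_{2s}$ and is therefore bounded by an absolute constant under our standing assumption.

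Next I would convert this $\ell_2$ error bound into a statement about individual coordinates. Since $\|x'-x\|_\infty\leq\|x'-x\|_2\leq C_2\frac{\epsilon}{\sqrt{m}}$, every coordinate off $T$ satisfies $|x'_j|=|x'_j-x_j|\leq C_2\frac{\epsilon}{\sqrt{m}}$, while every coordinate on $T$ satisfies $|x'_j|\geq|x_j|-C_2\frac{\epsilon}{\sqrt{m}}\geq(K-C_2)\frac{\epsilon}{\sqrt{m}}$, using the hypothesis $\min_{j\in T}|x_j|\geq K\frac{\epsilon}{\sqrt{m}}$. Choosing the absolute constant $K>2C_2$ yields $\min_{j\in T}|x'_j|>\max_{j\notin T}|x'_j|$, so the $s$ entries of $x'$ of largest magnitude are precisely those indexed by $T$, which is exactly the claim.

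The routine parts here are the two chained triangle inequalities; the only point requiring genuine care is tracking the $\sqrt{m}$ normalization, so that the noise level fed into the \cite{cai2013sparse} bound is $\frac{\epsilon}{\sqrt{m}}$ rather than $\epsilon$, and verifying that the constant $C_2$ which fixes the admissible $K$ is genuinely absolute under $\delta_{2s}<\frac{1}{\sqrt{2}}$. I expect the main obstacle to be this bookkeeping rather than any deep estimate, since the stability of $\ell_1$ minimization does all the real work.
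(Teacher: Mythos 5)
Your proof is correct and is essentially the paper's own argument: Proposition \ref{coarserecovery} is presented there as a modification of Proposition 4.1 in \cite{Rayan} combined with the $\ell_1$ stability guarantee of \cite{cai2013sparse}, which is precisely your chain of normalizing by $\sqrt{m}$, applying the Cai--Zhang bound $\|x'-x\|_2\leq C_2\frac{\epsilon}{\sqrt{m}}$ (with vanishing best $s$-term error), and thresholding with $K>2C_2$. The one caveat, shared with the paper's own statement, is that $C_2$ (and hence the admissible $K$) depends on the fixed value of $\delta_{2s}$ and degenerates as $\delta_{2s}\to\frac{1}{\sqrt{2}}$, so ``absolute constant'' should be read as ``constant depending only on a fixed RIP level bounded away from $\frac{1}{\sqrt{2}}$.''
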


Note that in this result, the measurement matrix $\Phi$ is not normalized, while
in the compressed sensing literature, it is common to normalize the measurement matrix such that it has unit-norm columns.
This is because for normalized matrix columns, each measurement will be of order $\frac{1}{\sqrt{m}}$, 
so quantizing it with a fixed step size $\Delta$ will lead to worse and worse resolution.
To allow for a fair comparison when $m$ grows, the measurements should rather be chosen independently of $m$. 
Therefore, in this paper as well as in \cite{Rayan} the measurement matrices
are not normalized, each entry of the measurement matrices is chosen to have variance one. 

To apply Proposition \ref{coarserecovery} to greedy $\Sigma\Delta$ quantization, 
one sets $e=q-y$, where $q$ is the quantized measurement vector. Elementary estimates (cf. \cite{Rayan}) yield that $\|q-y\|_2\leq 2^{r-1}\Delta\sqrt{m}$.
Thus one obtains that $\ell_1$ minimization recovers the correct support set provided that $\frac{1}{\sqrt{m}}\Phi$ has restricted isometry constant $\delta_{2s}<\frac{1}{\sqrt{2}}$
and $\min_{j}|x_j|\geq K 2^{r-\frac{1}{2}}\Delta$.

\subsubsection{Estimating the error and the  Sobolev dual}\label{sobolevdual}
When the support set $T$ has been identified, we solve for $x$
using some left inverse of $\Phi_T$, say $L$. 
Then the reconstruction $\ell_2$-error is given by
\begin{align*}
\|x-\hat{x}\|_2&=\|Ly-Lq\|_2=\|L(y-q)\|_2\\
 &=\|L(D^{r}u)\|_2\leq\|LD^{r}\|_{2 \rightarrow 2}\|u\|_2.
\end{align*}
The Sobolev dual matrix $L_{sob,r}$, first introduced in \cite{blum2010sobolev}, is a left inverse of $\Phi_T$ defined to minimize $\|LD^r\|_{2 \rightarrow 2}$, i.e.,
\begin{equation*}
\begin{array}{cc}
L_{sob,r}=\mbox{\rm{arg}}\min_{L}\|LD^r\|_{2 \rightarrow 2}&\ \mbox{subject to}\ L{\Phi_T}=I. 
\end{array}
\end{equation*}
The geometric intuition is that this dual frame is smoothly varying. 

As in \cite{Rayan}, the explicit formula $L_{sob,r}D^r=(D^{-r}{\Phi_T})^\dag$ yields the error bound
 \begin{align}
\|x-\hat{x}\|_2&\leq \|(D^{-r}{\Phi_T})^\dag\|_{2 \rightarrow 2}\|u\|_2 \nonumber\\
&=\frac{1}{\sigma_{min}(D^{-r}{\Phi_T})}\|u\|_2\leq
\frac{\Delta\sqrt{m}}{2\sigma_{min}(D^{-r}{\Phi_T})},\label{main}
\end{align}
where the last inequality is derived in \cite{Rayan}.

A key ingredient to bounding $\sigma_{\min}(D^{-r}\Phi_T)$ is the following result from the study of Toeplitz matrices, 
which depends heavily on Weyl's inequality \cite{horn2012matrix} (see for example \cite{Rayan}).
\begin{Proposition}\label{D^-r}
Let $r$ be any positive integer and $D$ be as in (\ref{D}). There are positive constants $c_{s_1}(r)$ and $c_{s_2}(r)$, independent of $m$, such that
\begin{equation*}
c_{s_1}(r)(\frac{m}{j})^r\leq \sigma_j(D^{-r})\leq c_{s_2}(r)(\frac{m}{j})^{r},\ j=1,\ldots,m.
\end{equation*}
\end{Proposition}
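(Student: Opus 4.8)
The plan is to reduce everything to the case $r=1$ together with the exact spectrum of a circulant matrix, letting Weyl's inequality absorb the passage from triangular to circulant Toeplitz operators. Since $D$ is invertible with $\det D = 1$, one has $\sigma_j(D^{-r}) = 1/\sigma_{m+1-j}(D^{r})$, so it suffices to locate the singular values of the banded lower-triangular Toeplitz matrix $D^{r}$; equivalently one works with $(D^{-1})^{r}$, where $D^{-1}$ is the all-ones lower-triangular (summation) matrix. First I would settle the base case $r=1$ by diagonalizing the tridiagonal matrix $DD^{*}$, which carries $2$ on the diagonal (except for a $1$ in the first entry) and $-1$ on the off-diagonals. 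Its eigenvalues have the closed form $\left(2\sin\tfrac{(2k-1)\pi}{2(2m+1)}\right)^{2}$, so that $\sigma_j(D^{-1}) = \bigl(2\sin\tfrac{(2j-1)\pi}{2(2m+1)}\bigr)^{-1}$; using $\tfrac{2}{\pi}\theta \le \sin\theta \le \theta$ on $[0,\tfrac{\pi}{2}]$ then gives $\sigma_j(D^{-1}) \gtrsim m/j$ and $\sigma_j(D^{-1}) \lesssim m/j$ with absolute constants.

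For the \emph{upper} bound for general $r$, I would iterate the multiplicative Weyl inequality $\sigma_{a+b-1}(AB) \le \sigma_a(A)\sigma_b(B)$ \cite{horn2012matrix} across the $r$ identical factors of $(D^{-1})^{r}$. Splitting the target index as evenly as possible, i.e.\ choosing each of the $r$ factors at an index $k$ of order $j/r$ so that their Weyl-combined index is at most $j$, yields $\sigma_j(D^{-r}) \le \sigma_{k}(D^{-1})^{r}$; plugging in the base case produces $\sigma_j(D^{-r}) \le c_{s_2}(r)\,(m/j)^{r}$, with the factor $r^{r}$ absorbed into the $r$-dependent constant.

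The \emph{lower} bound is the main obstacle, since the naive multiplicative Weyl estimate is far too lossy in this direction. Here I would compare $D$ with the circulant difference operator $C_D$ that agrees with $D$ except for the wrap-around entry in the top-right corner, so that $C_D - D = -e_1 e_m^{*}$ has rank one. Expanding $C_D^{r} = (D + (C_D-D))^{r}$, every term containing the rank-one correction has rank one, whence $E := C_D^{r} - D^{r}$ has rank at most $\rho = \rho(r) \le 2^{r}-1$. Applying Weyl's \emph{additive} singular-value inequality to $D^{r}$ and $E$, and using $\sigma_{\rho+1}(E) = 0$, gives the two-sided interlacing $\sigma_{i+\rho}(C_D^{r}) \le \sigma_i(D^{r})$ and $\sigma_{i+\rho}(D^{r}) \le \sigma_i(C_D^{r})$, i.e.\ the spectra of $D^{r}$ and $C_D^{r}$ agree up to a constant index shift $\rho$. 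Since $C_D^{r}$ is circulant, its singular values are exactly $\bigl(2|\sin(\pi t/m)|\bigr)^{r}$, $t=0,\dots,m-1$, whose $k$-th smallest value is of order $(k/m)^{r}$ for the low-frequency modes. Reading this off through the interlacing with $i = m+1-j-\rho$ gives $\sigma_{m+1-j}(D^{r}) \le \sigma_{m+1-j-\rho}(C_D^{r}) \lesssim (j/m)^{r}$ up to a constant depending only on $r$, and therefore $\sigma_j(D^{-r}) \gtrsim c_{s_1}(r)\,(m/j)^{r}$.

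The subtle points I would watch are that $C_D$ is singular---the constant mode contributes a genuine zero singular value---so the bottom of the spectrum of $C_D^{r}$ must be accessed through the index shift $\rho$ rather than matched directly; and that the finitely many indices $j$ near the two ends, where the shift would push an index outside $[1,m]$, are absorbed into the constants (for small $j$ the already-established upper bound together with a crude bound on $\sigma_{\min}$ suffices, and there are at most $\rho$ exceptional indices). Establishing the constant-rank perturbation bound and transferring the clean circulant estimate to the triangular truncation via Weyl's inequality is the technical heart of the argument, exactly as indicated in \cite{Rayan}.
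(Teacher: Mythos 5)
Your plan is correct and is essentially the argument the paper itself invokes: the paper does not prove Proposition \ref{D^-r} but defers to \cite{Rayan}, where the bounds are obtained exactly as you propose—by comparing $D^r$ with its circulant counterpart, whose singular values $\bigl(2|\sin(\pi t/m)|\bigr)^r$ are explicit, and transferring via Weyl's singular-value inequalities for the bounded-rank perturbation, with your product-form Weyl estimate built on the exact $r=1$ spectrum being only a detail-level variation for the upper bound. One small mislabeling: in the lower bound the exceptional indices are the \emph{large} $j$ (equivalently the small indices $m+1-j-\rho<1$ on the $D^r$ side), not small $j$, but the crude bound $\sigma_j(D^{-r})\geq \|D\|_{2\rightarrow 2}^{-r}\geq 2^{-r}$ handles them exactly as you intend.
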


\section{RIP-based error analysis}\label{RIP-based error analysis}


In this section we will give the quantized compressed sensing problem a mathematical model, 
and explain how we approach the reconstruction error via the RIP.
In the next two sections we show its applications.
From Section \ref{sobolevdual}, the main issue to estimate the reconstruction error is to estimate $\sigma_{min}(D^{-r}{\Phi_T})$.
Finding the supremum of this expression over all potential support sets $T$ can be interpreted as finding 
the supremum of the smallest image under $D^{-r}\Phi$ over all unit norm $s$-sparse vectors.
This motivates the connection to the RIP.

In the following proof we show how the RIP can be applied to find this effective smallest singular value.

 \begin{proof}[Proof of Theorem \ref{error}]
 As the assumptions of the theorem are stronger than those of Proposition \ref{coarserecovery}, we conclude that the support is correctly recovered. Based on this observagtion, we now show the error bound.
 Recall that $D^{-r}=U_{D^{-r}}S_{D^{-r}}V_{D^{-r}}^*$.
 Then, as $S$ is a diagonal matrix,

 \begin{align}
 \sigma_{\min}(D^{-r}{\Phi_T})&=\sigma_{\min}(S_{D^{-r}}V^*_{D^{-r}}{\Phi_T})\notag\\
 &\geq \sigma_{\min}(P_\ell S_{D^{-r}}V^*_{D^{-r}}{\Phi_T})\notag\\
 &= \sigma_{\min}((P_{\ell}S_{D^{-r}}P^*_{\ell} )(P_\ell V^*_{D^{-r}}{\Phi_T}))\notag\\
 &\geq s_\ell \sigma_{\min}(P_\ell V^*_{D^{-r}}{\Phi_T})\notag\\
 &\gtrsim  (\frac{m}{\ell})^r\sigma_{\min}(P_\ell V^*_{D^{-r}}{\Phi_T})\label{minsingular},
 \end{align}
 where the final inequality follows from Proposition \ref{D^-r}.
 
Thus we need to bound $\sigma_{\min}(P_\ell V^*_{D^{-r}}{\Phi_T})$ uniformly over all possible support sets $T$.
 Indeed by the RIP assumption for $\frac{1}{\sqrt{\ell}}P_\ell V^*_{D^{-r}}{\Phi}$, we obtain 
 that $\sigma_{\min}(P_\ell V^*_{D^{-r}}{\Phi_T})$ is uniformly bounded from below by 
 \begin{equation}\label{subRIP}
 \sqrt{\ell}\sqrt{1-\tfrac{1}{\sqrt{2}}}.
 \end{equation}
 The theorem  follows by combining \eqref{main}, (\ref{minsingular}), and (\ref{subRIP}).

\end{proof}

 \section{Gaussian and subgaussian matrices}\label{Gaussian and subgaussian matrices}
 To illustrate the simplicity of our method, we first present a proof of Theorem \ref{subgaussian} for standard Gaussian matrices, i.e. matrices with independent entries $\Phi_{i,j}\sim \mathcal{N}(0,1)$.
 \begin{proof}[Proof of Theorem \ref{subgaussian} for Gaussian matrices]
 Set $\ell:=m (\frac{s}{m})^\alpha$. As the second factor is less than $1$, one always has $1\leq \ell\leq m$.
 Since $\Phi$ is a standard Gaussian random matrix, due to rotation invariance $\widetilde\Phi:=(P_\ell V^*_{D^{-r}}\Phi)$ is also a standard Gaussian random matrix. 
 The assumption on $\lambda$ implies that its embedding dimension satisfies $\ell \geq C s \log(\tfrac{eN}{k})$, so standard results (see, e.g., \cite{BDDW07:Johnson-Lindenstrauss}) yield that for $C$ and $C'$ large enough, both $\tfrac{1}{\sqrt \ell}\widetilde \Phi$ and $\tfrac{1}{\sqrt m}\Phi$ have the RIP with constant $\delta_{2s}\leq \tfrac{1}{\sqrt{2}}$.
Applying Theorem \ref{error}  (choose $K'=K2^{r-\frac{1}{2}}$), we obtain
  \begin{align*}
\|x-\hat{x}\|_2&\lesssim\Delta(\frac{m}{s})^{-\alpha (r-\frac{1}{2})},
\end{align*}
again with high probability, as desired.
\end{proof}

{\em Sketch of proof of Theorem \ref{subgaussian} for subgaussian matrices:}\\
The proof proceeds long the same lines as for Gaussian matrices, except that one cannot use the rotation invariance.
To bound the RIP constant of $\tfrac{1}{\sqrt \ell} P_\ell V_{D^{-r}}^* \Phi$, 
we note that for any $x\in D_{s,N}$, $\|P_\ell V_{D^{-r}}^*\Phi x\|_2^2$ is a quadratic form in the ``vectorization''
of $\Phi$. Hence its tail decay can be estimated via the Hanson-Wright inequality \cite{HW71,rv13}. The RIP then follows via a union bound over an $\epsilon$-net of $D_{s,N}.$ This approach is related to certain steps in the original proof in \cite{KrahmerSaab}.

\begin{remark}
Note that a complete proof for $\Sigma\Delta$ recovery guarantees needs both support set recovery and fine recovery (cf.~\cite{Rayan, KrahmerSaab}) and in this paper we omitted the details of the former.
 We argue, however, that this coarse recovery step is straightforwardly based on standard compressed sensing results.
 Hence the core of our error estimate is really just captured in a few lines.
\end{remark}

\section{Generalization}\label{generalization}
In contrast to the techniques presented in \cite{KrahmerSaab}, our method generalizes to certain random matrices with independent subgaussian columns, but no entrywise independence.
As an additional criterion, one needs a type of small ball condition for $P_\ell V_{D^{-r}}^*$ applied to one of the random columns of $\Phi$, (which denoted by $\Phi_j$ in the following).
That is, one needs to exclude that $\|P_{\ell}V_{D^{-r}}\Phi_j\|_2$ is small with too large probability.
If such a condition holds, the necessary RIP bound follows from a modified version of the RIP bound for matrices with independent subgaussian columns \cite{vershynin2010introduction}.
While we do not consider this to be an important generalization (which is why we refrain from presenting the details), we still believe it shows that our method is stronger than previous approaches,
so we see the potential to apply it to more relevant, structured measurement scenarios such as partial random Fourier matrices, partial random circulant matrices, etc.

\section{Conclusion}\label{conclusion}
In this work we provided a new technique for bounding the reconstruction error arising in $\Sigma\Delta$ quantization for compressed sensing.
In addition to greatly simplifying the proofs for the best known recovery guarantees, the new viewpoint hopefully opens the possibility to study broader classes of measurement matries.

\section*{Acknowledgment} The authors would like to thank Rayan Saab for providing helpful comments.
This work was supported by the DFG Research Training Group 1023.

\bibliographystyle{plain} 
\bibliography{references} 
\end{document}